\newcommand{\vare}{\sigma^2_{v}}
\newcommand{\xo}{x^{\mathrm{o}}}
\newcommand{\X}{X^{\mathrm{o}}_{0}}
\newcommand{\bX}{\bar{X}^{\mathrm{o}}_{0}}
\newcommand{\Xplus}{X^{\mathrm{o}}_1}
\newcommand{\bXplus}{\bar{X}^{\mathrm{o}}_{1}}
\newcommand{\Xbc}{\bar{X}^{\mathrm{bc}}_{0}}
\newcommand{\Xplusbc}{\bar{X}^{\mathrm{bc}}_{1}}
\newcommand{\Phiiv}{\Phi^{\mathrm{iv}}}
\newcommand{\Xiv}{\bar{X}^{\mathrm{iv}}_{0}}
\newcommand{\Xplusiv}{\bar{X}^{\mathrm{iv}}_{1}}
\newcommand{\Phio}{\Phi^{\mathrm{o}}}
\newcommand{\expect}[1]{\mathbb{E}{\{#1\}}}
\newcommand{\smallmat}[1]{\left[ \begin{smallmatrix}#1 \end{smallmatrix} \right]}
\newtheorem{theorem}{Theorem}
\newtheorem{lemma}[theorem]{Lemma}
\newtheorem{remark}{Remark}
\newtheorem{problem}{Problem}
\newtheorem{proposition}{Proposition}
\newtheorem{assumption}{Assumption}
\title{\LARGE \bf
Bias correction and instrumental variables for\\direct data-driven model-reference control   
}
\author{Manas Mejari,   Valentina Breschi, Simone Formentin, Dario Piga 
\thanks{M. Mejari and D. Piga are with IDSIA Dalle Molle
	Institute for Artificial Intelligence, SUPSI, Via la Santa 1, CH-6962 Lugano-Viganello, Switzerland.  S. Formentin is with Dipartimento di Elettronica, Informazione e Bioingegneria, Politecnico di Milano, P.za L. Da Vinci 32, 20133 Milan, Italy. V. Breschi is with Department of Electrical Engineering, Eindhoven University of Technology, 5600 MB Eindhoven, The Netherlands.}
}
\begin{document}



\maketitle
\thispagestyle{empty}
\pagestyle{empty}

\begin{abstract}
Managing noisy data is a central challenge in direct data-driven control design. We propose an approach for synthesizing model-reference controllers for \emph{linear time-invariant} (LTI) systems using noisy state-input data, employing novel noise mitigation techniques. Specifically, we demonstrate that  using data-based covariance parameterization of the controller enables bias-correction and instrumental variable techniques within the data-driven optimization, thus reducing measurement noise effects as data volume increases. The number of decision variables remains independent of dataset size, making this method scalable to large datasets. The approach’s effectiveness is demonstrated with a numerical example. 
\end{abstract}

\section{INTRODUCTION}

In many control applications, the performance specifications are given in terms of a user-defined reference model.  Then, the objective is  to design a controller such that the closed-loop dynamics of the system matches that of the reference model.  This is referred to in the literature as the \emph{model-reference control} (MRC) problem. The model-based solution to the MRC problem (see \emph{e.g.},~\cite{lavretsky09}), assumes that an exact model of the system is known, either obtained via first principles or from data through a system identification procedure. However, first principle models are often not available, and the model that best fits the data may not be an optimal one for the final control objectives~\cite{formentin14}.
In recent years, the \emph{direct} data-driven control paradigm has emerged as an attractive alternative to the model-based framework. This approach aims  to map 
the data directly onto the controller parameters  focusing on the final control objective, and eliminating the need for an intermediate system identification step, see \cite{piga18,tesi20,berb22,mg23,mgp23}. 

In the context of MRC problem, direct data-driven methods proposed in the literature include \emph{iterative} schemes such as Correlation-based
Tuning (CbT)~\cite{karimi04CbT} and iterative feedback tuning~\cite{HjalmarssonIFT98}  which compute the controller parameters with a gradient based minimization of a control objective, 
and \emph{non-iterative} one-shot methods such as \emph{virtual reference feedback tuning} (VRFT)~\cite{campi_vrft02},  non-iterative CbT with asymptotic stability guarantees~\cite{klaske11}, prediction-error identification method~\cite{pemMRC17}. 
Most of these approaches are  limited to \emph{single-input single-output} (SISO) systems, or \emph{multi-input multi-output} (MIMO) systems having only a 
few input/output channels. 
Recent works~\cite{bpft21, wang23arXiv} have addressed this limitation by focusing on the MRC problem within the \emph{state-space} setting suitable for handling large-scale MIMO systems. 
In \cite{wang23arXiv}, the notion of informative data for MRC is developed, and  data-based LMI conditions are derived to compute a model-reference controller that achieves model matching robustly \emph{for all}
systems consistent with the data. This method relies on an \emph{open-loop} data-based characterization of a set of LTI matrices consistent with data and specified noise bounds.   
In contrast,  \cite{bpft21} proposes a direct data-driven approach with Lyapunov stability guarantees within a stochastic framework, leveraging data-based parameteriztion of the \emph{closed-loop} LTI matrices presented in~\cite{tesi20}. To cope with noisy data, an \emph{averaging} strategy is proposed by conducting multiple repeated experiments and averaging the collected data matrices under zero-mean Gaussian noise assumption. 

In this paper, we build upon the framework established in \cite{bpft21}. However, our main contribution  is the introduction of efficient  techniques for  handling  measurement noise that differ significantly from the averaging strategy proposed in \cite{bpft21}. Specifically, we adapt the covariance policy parameterization introduced in \cite{zhao24arXiv} for LQR control to our MRC problem. With this parameterization the number of decision variables in the formulated optimization problem is \emph{independent} of the length of the dataset,  overcoming a limitation of the parameterization considered in \cite{bpft21} where the number of decision variables increases with dataset length. To handle the measurement noise in an efficient way, we propose two approaches in the spirit of: (i) \emph{bias-correction} (BC) schemes, and (ii)  \emph{instrumental variable} (IV) techniques. In particular,  we show that with the covariance parameterization of the controller and that of the closed-loop,  we can integrate  bias-correction and instrumental variable concepts--typically used in system identification to obtain  consistent estimates of the model parameters~\cite{soderstrom2002IV,piga15,mejari2018bias}--within the direct data-driven framework for model-reference control design. 

The paper is organized as follows: The problem addressed in this paper is formalized in Section~\ref{sec:prob}. In Section~\ref{sec:data_based_cl}, a data-based  characterization  of the closed-loop with covariance parameterization of the model-reference controller is presented, along with a solution to the MRC problem with noise-free data. Our main result is derived in Section~\ref{sec:bc_iv}, where we introduce bias-correction and instrumental variable techniques to design model-reference controller from noisy data. Finally, the effectiveness of the proposed approches is shown via a numerical example in Section~\ref{sec:example}.

\section{PROBLEM FORMULATION}\label{sec:prob}

Let us consider an LTI data-generating system described by the  discrete-time equations: 
\begin{align}\label{eq:system}
\xo(t+1) &= A_{o} \xo(t) + B_{o}u(t), \nonumber \\ 
x(t) &= \xo(t)+ v(t),
\end{align}
where  $\xo(t) \in \mathbb{R}^{n}$ denotes the noise-free state, $u(t) \in \mathbb{R}^{m}$ is the control input, $x(t) \in \mathbb{R}^{n}$ is the noisy measured state,  and $v(t) \in \mathbb{R}^{n}$ is the noise vector, at time $t \in \mathbb{N}$. The true system matrices  $(A_o, B_o)$  are \emph{unknown}, instead,  we suppose that a noisy  dataset  $\{x(t), u(t)\}_{t=0}^{T}$ of $T+1$ state-input samples gathered from  system~\eqref{eq:system} is available.

We also assume that a user-specified  reference model $\mathcal{M}$ is provided, which dictates the desired
closed-loop response to a reference signal $r(t) \in \mathbb{R}^{n}$. The reference model is defined by the following LTI state-space representation,
    \begin{align}\label{eq:reference}
        \mathcal{M}: \quad  x_d(t+1) = A_Mx_d(t)+B_Mr(t),
    \end{align}
   where $x_d(t) \in \mathbb{R}^{n}$   denotes the desired state  at time $t$, and $A_M, B_M  \in \mathbb{R}^{n \times n}$ are fixed \emph{given} matrices. The reference model $\mathcal{M}$ is assumed to stable. 
In a \emph{model-based} setting, where the true matrices $A_o, B_o$ are assumed to be \emph{known},  the  model-reference matching problem is stated as follows.
\begin{problem}[Model-based matching]\label{prob:matching_model_based}
For an LTI system in~\eqref{eq:system} with \emph{known} $A_o, B_o$ and given a stable reference model $\mathcal{M}$ as in \eqref{eq:reference}, 
find two controller gain matrices $K_x, K_r \in \mathbb{R}^{m \times n}$ matching the followig conditions: 
\begin{subequations}\label{eq:model-based matching}
    \begin{align}
    A_o+B_oK_x &= A_M, \\
    B_oK_r &= B_M. 
\end{align}
\end{subequations}
The matching problem is said to be \emph{feasible} if $K_x, K_r$ satisfying \eqref{eq:model-based matching} exists. $\hfill$ $\square$
\end{problem}

Note that if the matching conditions in~\eqref{eq:model-based matching} are satisfied, then
with a  feedback  controller parameterized as, 
\begin{equation}\label{eq:MR_control}
    u(t)=K_x \xo(t)+K_r r(t),
\end{equation}
the  closed-loop dynamics:
\begin{equation}\label{eq:cl_mr}
    \xo(t+1) = (A_o + B_oK_x)\xo(t)+B_oK_rr(t), 
\end{equation}
 matches the  dynamics of the reference model given in~\eqref{eq:reference}. In other words, the \emph{noise-free} behaviour of the closed-loop system matches that of $\mathcal{M}$.

In this work, we are interested in  the \emph{data-driven} counterpart of Problem~\ref{prob:matching_model_based}. In particular,  our goal is to design a \emph{stabilizing} model-reference controller parameterized as in \eqref{eq:MR_control}, under the assumption that   $A_o$ and
$B_o$ are \emph{unknown} and we only have access to a finite set of noisy state-input data. 
The data-driven matching problem addressed in this paper is  formalized as follows.
\begin{problem}[Data-driven model-reference matching]\label{prob:DD_mrc}
Given a  noisy state-input dataset  $\{x(t), u(t)\}_{t=0}^{T}$ gathered from \eqref{eq:system},  
and a desired stable reference model~\eqref{eq:reference}   with given matrices $A_M, B_M$, compute two matrices $K_x, K_r \in  \mathbb{R}^{m \times n}$ of the controller \eqref{eq:MR_control}, such that the closed-loop matrices satisfy the matching conditions given in \eqref{eq:model-based matching}. $\hfill$ $\square$ 
\end{problem}

To solve Problem~\ref{prob:DD_mrc},  we  present  a \emph{data-based} characterization of the closed-loop LTI  matrices via an efficient covariance parameterization of the controller. Based on this parameterization, we first present a data-driven solution to the MRC problem with noise-free data. Then, we consider noisy data and propose techniques to handle  the effect of the measurement noise. 


\section{\MakeUppercase{Data-based  characterization of the closed-loop dynamics}}\label{sec:data_based_cl}

In this section, we aim to obtain a representation of the  \emph{closed-loop} dynamics solely based on the available data, eliminating its dependence on the model matrices $A_o, B_o$.  
To this end, we adapt the \emph{covariance policy} parameterization proposed in \cite{zhao24arXiv} for LQR control to our model-reference setting, and  derive a data-based characterization of the closed-loop LTI matrices. 

Let us define the following data matrices constructed from the state-input samples,
\begin{subequations}\label{eq:data}
   \begin{align}
X_{0} &:= \begin{bmatrix}
 x(0) \ x(1) \cdots \  x(T-1)   
\end{bmatrix} \in \mathbb{R}^{n \times T}, \label{eq:X0}\\ 
U_{0} &:= \begin{bmatrix}
    u(0)\ u(1) \cdots \ u(T-1)
\end{bmatrix} \in \mathbb{R}^{m \times T}, \label{eq:U0}\\
X_{1} &:=\begin{bmatrix}
    x(1)\ x(2) \cdots \ x(T)
\end{bmatrix} \in \mathbb{R}^{n \times T}, \label{eq:X1}
\end{align} 
\end{subequations}
and let $\X$ and $\Xplus$ be the noise-free counterpart of the matrices in \eqref{eq:X0}, \eqref{eq:X1}, respectively, defined as,
\begin{subequations}\label{eq:data_noise_free}
    \begin{align}
    \X &:= \begin{bmatrix}
 \xo(0) \ \xo(1) \cdots \  \xo(T-1)   
\end{bmatrix}, \\
\Xplus &:= \begin{bmatrix}
    \xo(1)\ \xo(2) \cdots \ \xo(T)
\end{bmatrix}.
\end{align}
\end{subequations}
We also define the following  matrices corresponding to an unknown noise realization corrupting the state sequence,
\begin{subequations}\label{eq:noise_seq}
    \begin{align}
        V_0 &:= \begin{bmatrix}
            v(0) & v(1) & \cdots & v(T-1)
        \end{bmatrix},\\
        V_1 &:= \begin{bmatrix}
            v(1) & v(2) & \cdots & v(T)
        \end{bmatrix}.
    \end{align}
\end{subequations}
Note that the matrices \eqref{eq:data}-\eqref{eq:noise_seq} satisfy the   dynamics~\eqref{eq:system}, \emph{i.e.}, 
\begin{align}\label{eq:data_dynamics}
   & X_{1}^{o} = A_o X_{0}^{o} + B_o U_0, \ X_{0} = X_{0}^{o} + V_0, X_{1} = X_{1}^{o} + V_1,  \nonumber \\
   &\Rightarrow X_1 = A_o X_0 + B_o U_0 - (A_o V_0-V_1)
\end{align}

Let us  define a matrix $\Phi$ constructed from the data matrices in \eqref{eq:data} as,
\begin{equation}\label{eq:Phi}
    \Phi := \begin{bmatrix}
    U_0 \\ X_0
\end{bmatrix} \in \mathbb{R}^{(m+n) \times T}.
\end{equation}

\begin{assumption}[Persistence of excitation] \label{asm:PE}
The data matrix $\Phi$ in \eqref{eq:Phi} is full row-rank:  
$\mathrm{rank}\left(\Phi\right) = m+n.$  
 $\hfill \square$
\end{assumption}

With these definitions, we now derive a data-based parameterization of the closed-loop. Let us first define the following matrices constructed from the data matrices  in \eqref{eq:data} and matrix $\Phi$ defined in \eqref{eq:Phi},
\begin{align}\label{eq:data_projected}
    &\bar{X}_1 := \frac{1}{T} X_1 {\Phi}^{\top}, \ \bar{X}_0 := \frac{1}{T} X_0 {\Phi}^{\top},
    \ \bar{U}_0 := \frac{1}{T} U_0 {\Phi}^{\top}. \   
\end{align}
We also define the following noise-induced terms, 
\begin{align}\label{eq:noise_projected}
   \bar{V}_0 :=  \frac{1}{T} V_0 {\Phi}^{\top}, \ \bar{V}_1 :=  \frac{1}{T} V_1 {\Phi}^{\top}.
\end{align}
 From~\eqref{eq:data_dynamics}, the matrices in~\eqref{eq:data_projected}-\eqref{eq:noise_projected} satisfy the  dynamics:
\begin{equation}\label{eq:data_dynamics_projected}
    \bar{X}_1 = A_o \bar{X}_0 + B_o \bar{U}_0 - (A_o \bar{V}_0 - \bar{V}_1).
\end{equation}

  We derive a data-based representation of the \emph{closed-loop} dynamics  by adapting the \emph{covariance} policy parameterization introduced in \cite{zhao24arXiv}.
To this aim,   let us define the following sample-covariance matrix,
\begin{align}\label{eq:covarince}
    \Sigma := \frac{1}{T}\Phi {\Phi}^{\top} = \begin{bmatrix}
        \bar{U}_0 \\ \bar{X}_0
    \end{bmatrix} \in \mathbb{R}^{(m+n) \times (m+n)}.
\end{align}
Note that under  Assumption~\ref{asm:PE}, $\Sigma$ is positive definite. 
By \emph{Rouché-Capelli} theorem,  there exists  unique solutions  $G_x, G_r, G_v \in \mathbb{R}^{(m+n)\times n}$ to the following system of linear equations, 
   \begin{align}\label{eq:G}
    &\begin{bmatrix}
        K_x\\ I_n
    \end{bmatrix} = \Sigma G_x, \  \begin{bmatrix}
        K_r\\ \mathbf{0}_n
    \end{bmatrix} = \Sigma G_r, \   \begin{bmatrix}
        K_x\\ \mathbf{0}_n
    \end{bmatrix} = \Sigma G_v,
\end{align} 
for some controller gain matrices $K_x, K_r$. 

 With the parameterization in \eqref{eq:G}, we can now express the closed-loop dynamics in terms of the (projected) data matrices in \eqref{eq:data_projected} and noise-induced terms in \eqref{eq:noise_projected}, as proved in the following proposition.

\begin{proposition}\label{prop:data_based_cl}
Let Assumption~\ref{asm:PE} be satisfied. Let us assume that the matrices  $G_x, G_r, G_v$ satisfy~\eqref{eq:G}. Then, by applying the control law $u(t)= K_x x(t) + K_r r(t)$ to system~\eqref{eq:system},  the resulting 
closed-loop dynamics can be expressed as,
\begin{align}\label{eq:data_based_cl}
    \xo(t\!+\!1) =  &(\bar{X}_1 \!+\! \bar{W}_0)G_x \xo(t) + (\bar{X}_1\!+\!\bar{W}_0)G_r r(t)+ \nonumber \\
    & (\bar{X}_1+\bar{W}_0)G_v v(t),
\end{align}
where 
$\bar{W}_0 = A\bar{V}_0 - \bar{V}_1$ is a noise-induced term.
\end{proposition}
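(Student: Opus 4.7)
The plan is to start from the closed-loop in the standard form and then rewrite the closed-loop matrices in terms of the data matrices using the parameterization~\eqref{eq:G}. First I would apply the controller $u(t)=K_x x(t)+K_r r(t)=K_x\xo(t)+K_x v(t)+K_r r(t)$ to the plant~\eqref{eq:system}, obtaining the expansion
\begin{equation*}
\xo(t+1)=(A_o+B_oK_x)\xo(t)+B_oK_r r(t)+B_oK_x v(t).
\end{equation*}
Thus the task reduces to showing that each of the three coefficient matrices $A_o+B_oK_x$, $B_oK_r$, and $B_oK_x$ equals $(\bar{X}_1+\bar{W}_0)G_\bullet$ for $\bullet\in\{x,r,v\}$.

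Next I would unpack~\eqref{eq:G} block-wise. From $\Sigma=\smallmat{\bar{U}_0\\\bar{X}_0}$ the first parameterization gives $K_x=\bar{U}_0G_x$ and, crucially, $I_n=\bar{X}_0G_x$. Hence
\begin{equation*}
A_o+B_oK_x=A_o(\bar{X}_0G_x)+B_o(\bar{U}_0G_x)=(A_o\bar{X}_0+B_o\bar{U}_0)G_x,
\end{equation*}
and the projected data dynamics~\eqref{eq:data_dynamics_projected} then yield $A_o\bar{X}_0+B_o\bar{U}_0=\bar{X}_1+(A_o\bar{V}_0-\bar{V}_1)=\bar{X}_1+\bar{W}_0$, giving the desired expression for the state gain.

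The same trick handles the remaining two terms, with an extra cancellation that is the only mildly non-obvious step. The second identity in~\eqref{eq:G} gives $K_r=\bar{U}_0G_r$ together with $\mathbf{0}=\bar{X}_0G_r$, so I can insert a ``free'' $A_o\bar{X}_0G_r=0$ and write $B_oK_r=(A_o\bar{X}_0+B_o\bar{U}_0)G_r=(\bar{X}_1+\bar{W}_0)G_r$; similarly the third identity yields $B_oK_x=(\bar{X}_1+\bar{W}_0)G_v$. Substituting the three expressions back into the closed-loop equation produces~\eqref{eq:data_based_cl}. The only place to be careful is this zero-padding step: the second block-row of $\Sigma G_r$ and $\Sigma G_v$ being zero is precisely what allows $A_o$ (an unknown matrix) to be reintroduced without changing the value, so that the known quantity $A_o\bar{X}_0+B_o\bar{U}_0$—and hence $\bar{X}_1+\bar{W}_0$—appears uniformly in all three coefficients. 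Uniqueness of $G_x,G_r,G_v$ is guaranteed by Assumption~\ref{asm:PE}, which makes $\Sigma$ invertible so the argument is well-posed.
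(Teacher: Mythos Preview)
Your proof is correct and follows essentially the same route as the paper: apply the controller to obtain the closed-loop, then use the block structure of $\Sigma G_\bullet$ together with the projected data relation~\eqref{eq:data_dynamics_projected} to rewrite each coefficient as $(\bar{X}_1+\bar{W}_0)G_\bullet$. The only cosmetic difference is that the paper keeps the factorization compact by writing $[B_o\ A_o]\Sigma G_\bullet=(B_o\bar{U}_0+A_o\bar{X}_0)G_\bullet$ directly, whereas you unpack the two block rows of $\Sigma$ and insert the ``free'' $A_o\bar{X}_0G_r=0$ and $A_o\bar{X}_0G_v=0$ explicitly; these are the same computation.
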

   
\begin{proof}
 By applying the control  $u(t) = K_x x(t) + K_r r(t)$ to the LTI system \eqref{eq:system}, the closed-loop dynamics is given by,
\begin{align}
    &\xo(t\!+\!1) \!=\! (A_o\!+\!B_oK_x)\xo(t) \!+\! B_oK_rr(t)\!+\! B_oK_xv(t), \nonumber\\ &= \left[B_o \ A_o\right]\left( \!\begin{bmatrix}
        K_x \\ I_n
    \end{bmatrix}\xo(t) \!+\! \begin{bmatrix}
        K_r \\ \mathbf{0}_n
    \end{bmatrix}r(t) \!+\! \begin{bmatrix}
        K_x \\ \mathbf{0}_n
    \end{bmatrix}v(t) \!   \right).
\end{align}
With the controller gain parameterization in \eqref{eq:G}, we have,
\begin{subequations}\label{eq:cl_matrices}
    \begin{align}
    &\left[B_o \ A_o\right] \begin{bmatrix}
        K_x \\ I_n
    \end{bmatrix} \!=\!\! \left[B_o \ A_o\right] \Sigma G_x \!\overset{\underset{\eqref{eq:covarince}}{}}{=}\! (B_o \bar{U}_0 \!+\! A_o\bar{X}_0) G_x, \\
      & \left[B_o \ A_o\right] \begin{bmatrix}
        K_r \\ \mathbf{0}_n
    \end{bmatrix} \!=\!  \left[B_o \ A_o\right] \Sigma G_r \!\overset{\underset{\eqref{eq:covarince}}{}}{=}\!\! (B_o \bar{U}_0 \!+\! A_o\bar{X}_0) G_r,\\
     &\left[B_o \ A_o\right] \begin{bmatrix}
        K_x \\ \mathbf{0}_n
    \end{bmatrix} \!=\!\! \left[B_o \ A_o\right] \Sigma G_v \!\overset{\underset{\eqref{eq:covarince}}{}}{=}\! (B_o \bar{U}_0 \!+\! A_o\bar{X}_0) G_v.
\end{align}
\end{subequations}
From the dynamics \eqref{eq:data_dynamics_projected} and defining $\bar{W}_0 = A\bar{V}_0-\bar{V}_1$, eq.~\eqref{eq:cl_matrices} can be re-written as,
\begin{subequations}\label{eq:data_cl_matrices}
    \begin{align}
    &\left[B_o \ A_o\right] \begin{bmatrix}
        K_x \\ I_n
    \end{bmatrix} \!=\!\! (\bar{X}_1 \!+\! \bar{W}_0)G_x , \\
      & \left[B_o \ A_o\right] \begin{bmatrix}
        K_r \\ \mathbf{0}_n
    \end{bmatrix} \!=\! (\bar{X}_1\!+\!\bar{W}_0)G_r,\\
     &\left[B_o \ A_o\right] \begin{bmatrix}
        K_x \\ \mathbf{0}_n
    \end{bmatrix} \!=\!\! (\bar{X}_1+\bar{W}_0)G_v,
\end{align}
\end{subequations}
thus, leading to the data-based closed-loop representation   in \eqref{eq:data_based_cl}.
\end{proof}

\begin{remark}
   In \eqref{eq:G}, we have parameterized the controller gains $K_x, K_r$ in terms of data covariance matrix $\Sigma$  and two new \emph{decision variables} $G_x, G_r$ to be computed. Note that in this parameterization, the number of decision variables is \emph{independent} of the length of the dataset $T$.
\end{remark}

\subsection{Model-reference matching with noise-free data}
For clarity of exposition, in this subsection, we first formulate data-driven model-reference control problem assuming \emph{noise-free} data. In the subsequent section, we will provide techniques to handle the measurement noise.  
If we have access to the noise-free state data $\X$ and $\Xplus$   in \eqref{eq:data_noise_free}, then the model-reference matching problem can be solved as follows.

Let $\Phio:= \begin{bmatrix}
    U_0 \\ \X
\end{bmatrix} $ be a matrix constructed from the noise-free data and let $    \Sigma^{o} : = \frac{1}{T}\Phio {\Phio}^{\top}$ be the  noise-free sample-covariance matrix,
which is assumed to be   positive
definite. Thus, there exists  unique solutions $G_x, G_r \in \mathbb{R}^{(n+m)\times n}$ to, 
\begin{align}\label{eq:G_nf}
    \begin{bmatrix}
        K_x\\ I_n
    \end{bmatrix} = \Sigma^{o} G_x, \
      \begin{bmatrix}
        K_r\\ \mathbf{0}_n
    \end{bmatrix} = \Sigma^{o} G_r.
\end{align}
Let us define the  noise-free counterparts of the matrices in \eqref{eq:data_projected} as
$\bXplus := \frac{1}{T} \Xplus {\Phio}^{\top}, \ \bX := \frac{1}{T} \X {\Phio}^{\top}, \   \bar{U}_0 := \frac{1}{T} U_0 {\Phio}^{\top}. $
As derived in Proposition~\ref{prop:data_based_cl}, data-based  parameterization of the closed-loop matrices is given as,
\begin{align}\label{eq:CL_matrices}
  A_o + B_o K_x =  \bXplus G_x,  \
    B_o K_r  = \bXplus G_r.
\end{align}

From~\eqref{eq:G_nf} and \eqref{eq:CL_matrices}, the data-driven matching conditions  can be cast as follows:
\begin{subequations}\label{eq:DD_matching_nf}
  \begin{align}
    \bXplus G_x &= A_M, \quad  
    \bXplus G_r = B_M, \\
    \bX G_x &= I_n, \ \  \quad  
    \bX G_r  = \mathbf{0}_n,  \label{eq:cons2_nf}
\end{align} 
\end{subequations}
where the consistency constraints \eqref{eq:cons2_nf} stem from the parameterization in \eqref{eq:G_nf}. Based on these conditions, we now formalize the data-driven solution to MRC problem in the following result.
\begin{theorem}\label{thm:matching_nf}
Consider the system~\eqref{eq:system},  
with a given reference model $\mathcal{M}$ in~\eqref{eq:reference}. Let Assumption~\ref{asm:PE} holds for noise-free data. Then, the matching problem is feasible if and only if  \eqref{eq:DD_matching_nf} are satisfied, and the solution set $G_x, G_r$ is such that the controller gain matrices computed from \eqref{eq:G_nf} as $K_x = \bar{U}_0 G_x$ and $K_r = \bar{U}_0G_r$, render the closed-loop dynamics equal to the reference dynamics $\mathcal{M}$. $\hfill$ $\square$
\end{theorem}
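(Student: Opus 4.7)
The plan is to prove both directions of the ``if and only if''. The key observation is that, under Assumption~\ref{asm:PE} applied to the noise-free data, the matrix $\Sigma^{o}$ is positive definite, hence invertible, so the linear systems in \eqref{eq:G_nf} define a bijection between pairs $(K_x,K_r)$ and pairs $(G_x,G_r)$. Moreover, since the data is noise-free, the noise-induced term $\bar{W}_0$ in the closed-loop representation of Proposition~\ref{prop:data_based_cl} vanishes, so the data-based closed-loop matrices reduce to \eqref{eq:CL_matrices}.

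\emph{Necessity.} Suppose the matching problem is feasible, so there exist gains $K_x,K_r$ with $A_o+B_oK_x=A_M$ and $B_oK_r=B_M$. By invertibility of $\Sigma^{o}$ there exist unique $G_x,G_r$ solving \eqref{eq:G_nf}. Exploiting the block structure $\Sigma^{o}=\left[\begin{smallmatrix}\bar{U}_0\\ \bX\end{smallmatrix}\right]$, identifying blocks in \eqref{eq:G_nf} yields $\bar{U}_0 G_x=K_x$, $\bX G_x=I_n$ and $\bar{U}_0 G_r=K_r$, $\bX G_r=\mathbf{0}_n$, which are exactly the consistency constraints in \eqref{eq:cons2_nf}. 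Combining the matching conditions with \eqref{eq:CL_matrices} gives $\bXplus G_x = A_o+B_oK_x=A_M$ and $\bXplus G_r = B_oK_r=B_M$, completing the remaining equations in \eqref{eq:DD_matching_nf}.

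\emph{Sufficiency.} Conversely, assume $(G_x,G_r)$ solve \eqref{eq:DD_matching_nf}, and \emph{define} $K_x:=\bar{U}_0 G_x$, $K_r:=\bar{U}_0 G_r$. Stacking the consistency constraints \eqref{eq:cons2_nf} on top of these definitions reproduces $\Sigma^{o}G_x=\left[\begin{smallmatrix}K_x\\ I_n\end{smallmatrix}\right]$ and $\Sigma^{o}G_r=\left[\begin{smallmatrix}K_r\\ \mathbf{0}_n\end{smallmatrix}\right]$, so $(G_x,G_r)$ provides a valid covariance parameterization of this $(K_x,K_r)$. Applying Proposition~\ref{prop:data_based_cl} in the noise-free case ($\bar{W}_0=0$) then gives $A_o+B_oK_x=\bXplus G_x=A_M$ and $B_oK_r=\bXplus G_r=B_M$, which are the matching conditions \eqref{eq:model-based matching}; hence the problem is feasible and the closed-loop dynamics under \eqref{eq:MR_control} coincides with $\mathcal{M}$.

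The main subtlety, rather than a real obstacle, is to carefully justify that \eqref{eq:G_nf} is truly a bijective parameterization of the admissible $(K_x,K_r)$: this is immediate once one notices that Assumption~\ref{asm:PE} on noise-free data makes $\Sigma^{o}$ invertible, so $G_x$ and $G_r$ are uniquely determined by any prescribed $K_x,K_r$, and conversely any solution of \eqref{eq:DD_matching_nf} reconstructs a unique $(K_x,K_r)$ via the top blocks. Everything else is a block identification in the linear relations already established by Proposition~\ref{prop:data_based_cl}.
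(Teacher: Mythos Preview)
Your proof is correct and follows exactly the approach the paper indicates: the paper merely states that the result ``follows straightforwardly from the parameterization in~\eqref{eq:G_nf} and the matching conditions~\eqref{eq:DD_matching_nf}'', and your argument is precisely a careful unpacking of that, exploiting the invertibility of $\Sigma^{o}$ under Assumption~\ref{asm:PE} and the noise-free specialization of Proposition~\ref{prop:data_based_cl}. There is nothing to add or correct.
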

The proof follows straightforwardly from the parameterization in~\eqref{eq:G_nf} and the matching conditions~\eqref{eq:DD_matching_nf}.

Theorem~\ref{thm:matching_nf} is based on the assumption that the matching problem is feasible. If, for the selected reference model $\mathcal{M}$, perfect matching is not possible, we can recast~\eqref{eq:DD_matching_nf} as an optimization problem~\cite{bpft21} as follows,
\begin{align}\label{eq:noise-free-opt_LTI}
    &\min_{G_x, G_r} \lVert \bXplus G_x - A_M \rVert + \lambda \lVert \bXplus G_r - B_M \rVert \nonumber \\
    &\mathrm{s.t.} \quad  \bX G_x = I_n, \quad 
    \bX G_r = \mathbf{0}_n, 
\end{align}
where $\lambda >0$ is a tuning hyper-parameter to weight between two matching objective terms and $\lVert \cdot \rVert$ is any norm.  In this case, we  need to enforce the stability constraints, as the closed-loop behavior does not exactly match that of the stable reference model $\mathcal{M}$. Let us consider a Lyapunov function $V(x(t)) = x^{\top}(t)P^{-1}x(t), \ P^{-1} \succ 0$. The Lyapunov  stability condition $V(x(t))-V(x(t+1)) >0,$  leads to   $ 
P-  \bXplus G_xP G^{\top}_x (\bXplus)^{\top}   \succ  0,$ with data-based  closed-loop dynamics.
Let $Q_x = G_x P$, then we have, $P-  \bXplus (Q_x) P^{-1} Q^{\top}_x( \bXplus )^{\top}  \succ  0$,
which followed by the Schur complement leads to the following LMI,
\begin{align}\label{eq:Lyapunov_LTI}
    \begin{bmatrix}
        P & \bXplus Q_x \\
        \star & P
    \end{bmatrix} \succ \mathbf{0}_{2n}.
\end{align}
The matching conditions in \eqref{eq:DD_matching_nf} can be re-written in terms of the new variables $(Q_x, Q_r, P)$ as follows,
\begin{subequations}\label{eq:DD_matching_lyap_nf}
  \begin{align}
    \bXplus Q_x &= A_M P, \quad 
    \bXplus Q_r = B_MP, \\
    \bX Q_x &= P, \quad \quad \ \ 
    \bX Q_r = \mathbf{0}_n,
\end{align} 
\end{subequations}
with $Q_x = G_x P, Q_r = G_r P$ and $P \succ 0$. From \eqref{eq:Lyapunov_LTI} and \eqref{eq:DD_matching_lyap_nf},  we  formulate the following \emph{semi-definite program} (SDP), enforcing Lyapunov stability constraints,
\begin{align}\label{eq:Lyap_SDP_noisefree}
     &\min_{Q_x, Q_r, P} \lVert \bXplus Q_x - A_M P\rVert + \lambda \lVert \bXplus Q_r - B_M P \rVert \nonumber \\
    &\mathrm{s.t.} \quad  \bX Q_x = P, \quad 
    \bX Q_r = \mathbf{0}_n, \nonumber \\
    & \quad \quad  \begin{bmatrix}
        P & \bXplus Q_x \\
        \star & P
    \end{bmatrix} \succ \mathbf{0}_{2n},
\end{align}
where we note that the LMI constraint ensures $P \succ 0$. The controller gains computed from the solutions of the SDP in~\eqref{eq:Lyap_SDP_noisefree} as $K_x = \bar{U}_0Q_x P^{-1}$ and $K_r = \bar{U}_0Q_r P^{-1}$ solve the matching problem, ensuring closed-loop stability. The result is formally stated as follows.  
\begin{proposition}\label{prop_KxKr_nf}
    Consider system \eqref{eq:system} with a given reference model \eqref{eq:reference}. The noise-free data satisfies Assumption~\ref{asm:PE}. Then, $(i)$  a feasible solution $(Q_x, Q_r, P)$ to the SDP \eqref{eq:Lyap_SDP_noisefree} exits if there exists a stabilizing linear static-state feedback controller for \eqref{eq:system} and the controller gain $K_x = \bar{U}_0Q_x P^{-1}$ ensures closed-loop stability; $(ii)$  if the matching problem is feasible, the  SDP \eqref{eq:Lyap_SDP_noisefree} is also feasible and the controller gains $K_x, K_r$  computed from any feasible solution $(Q_x, Q_r, P)$ as $K_x = \bar{U}_0Q_x P^{-1} $ and $K_r = \bar{U}_0Q_r P^{-1}$ solve the matching problem.
\end{proposition}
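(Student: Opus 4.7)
The plan is to establish both parts by exploiting the bijective change of variables $Q_x = G_x P$, $Q_r = G_r P$ (with $P \succ 0$) between the parameterization~\eqref{eq:G_nf} and the SDP decision variables, and by translating the LMI in~\eqref{eq:Lyapunov_LTI} into the standard discrete-time Lyapunov inequality via Schur complement. The two closed-loop identities $\bXplus G_x = A_o+B_oK_x$ and $\bXplus G_r = B_oK_r$ from~\eqref{eq:CL_matrices} will serve as the bridge between the data-based expressions and the true closed-loop matrices throughout.

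For part (i), suppose a stabilizing gain $K_x$ exists. Then there exists $P\succ 0$ satisfying the discrete-time Lyapunov inequality $(A_o+B_oK_x)P(A_o+B_oK_x)^\top - P \prec 0$. Under Assumption~\ref{asm:PE} on noise-free data, $\Sigma^o \succ 0$ and~\eqref{eq:G_nf} admits unique $G_x$ and $G_r$ (the latter for any freely chosen $K_r$, e.g.\ $K_r=0$). Setting $Q_x := G_x P$ and $Q_r := G_r P$, the bottom blocks of~\eqref{eq:G_nf} right-multiplied by $P$ yield the equality constraints $\bX Q_x = P$ and $\bX Q_r = \mathbf{0}_n$. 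By~\eqref{eq:CL_matrices}, $\bXplus Q_x = (A_o+B_oK_x)P$, so Schur complement reduces the LMI to the Lyapunov inequality above, which holds by construction; hence the SDP is feasible. Conversely, given any feasible triple $(Q_x, Q_r, P)$, define $G_x := Q_x P^{-1}$ (well-posed since $P\succ 0$): the constraint $\bX Q_x = P$ rewrites as $\bX G_x = I_n$, which together with $K_x := \bar{U}_0 G_x$ recovers exactly~\eqref{eq:G_nf}. Invoking $\bXplus G_x = A_o + B_o K_x$ once more, the LMI becomes the discrete-time Lyapunov inequality for $(A_o + B_o K_x, P)$, proving that $K_x$ stabilizes the closed-loop.

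Part (ii) proceeds analogously. If matching is feasible, there exist $K_x^\star, K_r^\star$ satisfying~\eqref{eq:model-based matching}. Since $A_M$ is stable, $K_x^\star$ stabilizes the plant, and the construction of part (i) applied to $K_x^\star, K_r^\star$ yields $(Q_x, Q_r, P)$ satisfying all SDP constraints. Moreover, $\bXplus Q_x = (A_o+B_oK_x^\star)P = A_M P$ and $\bXplus Q_r = B_oK_r^\star P = B_M P$, so the objective attains its minimum value zero; hence the SDP is feasible with optimum zero. At any such optimal solution, setting $G_x := Q_x P^{-1}$ and $G_r := Q_r P^{-1}$ and invoking~\eqref{eq:CL_matrices} yields $A_o + B_o K_x = \bXplus G_x = A_M$ and $B_o K_r = \bXplus G_r = B_M$ with $K_x = \bar{U}_0 Q_x P^{-1}$, $K_r = \bar{U}_0 Q_r P^{-1}$, establishing matching.

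The main obstacle is conceptual rather than technical: one must carefully handle the dual role of $P$ as both an SDP decision variable and a Lyapunov certificate, ensuring the bijection $(G_x, G_r, P) \leftrightarrow (Q_x, Q_r, P)$ is well-posed and that SDP feasibility is established non-circularly, \emph{i.e.}, by first producing a feasible point from the assumed stabilizing gain before using $P^{-1}$ to recover the controller from a generic feasible triple. A secondary subtlety is interpreting ``any feasible solution'' in part (ii) as \emph{any optimizer}, since the matching identities appear in the objective rather than as constraints and are enforced only at optima when the minimum value is zero.
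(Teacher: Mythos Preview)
Your argument is correct and follows exactly the route the paper implicitly adopts by deferring to \cite[Theorem~3]{bpft21}: use the bijection $(G_x,G_r)\leftrightarrow(Q_x,Q_r)$ given by right-multiplication with $P\succ 0$, invoke the closed-loop identities~\eqref{eq:CL_matrices}, and reduce the LMI in~\eqref{eq:Lyap_SDP_noisefree} to the discrete-time Lyapunov inequality via Schur complement. Since the paper omits these details, your write-up is precisely the missing proof.

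Your closing observation on part~(ii) is also well taken: because the matching conditions $\bXplus Q_x = A_M P$ and $\bXplus Q_r = B_M P$ enter only through the objective of~\eqref{eq:Lyap_SDP_noisefree}, a merely feasible (non-optimal) triple need not produce matching gains. The statement should be read as ``any optimal solution''; your argument correctly shows that when matching is feasible the optimal value is zero and every minimizer yields $K_x,K_r$ satisfying~\eqref{eq:model-based matching}.
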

The proof of Proposition~\ref{prop_KxKr_nf} follows \emph{mutatis mutandis}  that of \cite[Theorem $3$]{bpft21}, thus, we omit it for brevity.
\begin{remark}[Computational efficiency]
  The number of decision variables and constraints in SDP  \eqref{eq:Lyap_SDP_noisefree} is \emph{independent} of the dataset length $T$. Thus, it is computationally efficient even for very large datasets, contrary to the SDP formulated in \cite{bpft21} where the computational complexity increases with $T$. 
\end{remark}

\section{\MakeUppercase{Model-reference matching with noisy data}}\label{sec:bc_iv}

In practice, the gathered dataset is noisy, and the noise-free matrices  $\bX, \bXplus $ in \eqref{eq:Lyap_SDP_noisefree} are \emph{not} available. To address this issue, we present suitable modifications to the SDP \eqref{eq:Lyap_SDP_noisefree}, to account for the available noisy dataset. In particular, to handle  the effects of measurement noise, we propose two methods: the first is based on \emph{bias-correction} principles, and the second employs \emph{instrumental variables} (IV), as detailed in the following subsections.

\subsection{Bias-correction approach}\label{sec:bc}

In this approach, we aim to obtain two matrices $\Xbc, \Xplusbc$  which can be constructed from the noisy data, to replace noise-free  matrices $\bX, \bXplus $ in \eqref{eq:Lyap_SDP_noisefree}. 
These matrices will be constructed in such a way that asymptotically (as $T \rightarrow \infty$), the bias induced due to the noisy data vanishes and $\Xbc, \Xplusbc$ converge to the noise-free matrices $\bX, \bXplus $ respectively. More formally, we construct  $\Xbc, \Xplusbc$  (which depend on $T$) such that the following property holds:
 \begin{subequations}\label{prop:asym_condition}
   \begin{align}
 & \lim_{T \rightarrow \infty}  \Xbc = \lim_{T \rightarrow \infty} \bX,  \quad \mathrm{w.p.} \ 1, \label{eq:C1}\\
    &\lim_{T \rightarrow \infty}  \Xplusbc = \lim_{T \rightarrow \infty}  \bXplus,  \quad \mathrm{w.p.} \ 1. \label{eq:C2}
\end{align}  
 \end{subequations}

 To this end, we consider the following assumptions 
 to hold:
 \begin{assumption}\label{assmp}
 The data-generating system \eqref{eq:system} satisfies the following conditions:
\begin{enumerate}[label=(\roman*)]
     \item \label{asm:known_var}  The measurement noise is a  zero-mean  white Gaussian distributed $v(t) \sim \mathcal{N}(0, \vare)$  with \emph{known} variance $\vare$. 
     \item \label{asm:bounded_xo}  The noise-free state sequence $\{\xo(t)\}_{t=0}^{T}$ is bounded, \emph{i.e.}, 
 $ \exists C_x \ \mathrm{s.t.} \ \| \xo(t) \| \leq C_x, \quad t=0,\ldots,T$.
     $\hfill$ $\square$
 \end{enumerate} 
 \end{assumption}
Under these assumptions, in the following proposition, we show how to construct the matrices $\Xbc, \Xplusbc$ from the available data such that the property \eqref{prop:asym_condition} is satisfied. 

\begin{proposition}\label{prop1}
    Let us define $\Xbc, \Xplusbc$ as follows,
    \begin{subequations}\label{eq:replace_mat}
           \begin{align}
           \Xbc &:= \bar{X}_0 - \frac{1}{T} \Psi, \label{eq:replace_mat1}\\
    \Xplusbc &:= \bar{X}_1, \label{eq:replace_mat2}
\end{align} 
    \end{subequations}
where  the \emph{bias-correcting} matrix $\Psi$ is given as,
\begin{align}
\Psi &= \begin{bmatrix}
\mathbf{0}_{n \times m} & T\vare I_n
    \end{bmatrix},
\end{align}
and we recall that $\bar{X}_0 = \frac{1}{T}X_0\Phi^{\top}, \bar{X}_1 = \frac{1}{T}X_1\Phi^{\top}$. Then, under Assumption~\ref{assmp}, the matrices $\Xbc, \Xplusbc$ satisfy conditions~\eqref{eq:C1}, \eqref{eq:C2}, respectively.
\end{proposition}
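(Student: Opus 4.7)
The plan is to decompose $\bar X_0$ and $\bar X_1$ into noise-free and noise-induced components and then apply the strong law of large numbers (SLLN) blockwise, showing that all but one of the noise contributions vanish almost surely while the sole non-vanishing bias is exactly cancelled by $\frac{1}{T}\Psi$. Substituting $X_0 = \X + V_0$ and $X_1 = \Xplus + V_1$ into the definitions of $\bar X_0, \bar X_1$ and expanding $\Phi = \Phio + [\mathbf{0}_{m\times T}^\top\;\, V_0^\top]^\top$ accordingly, one obtains
\begin{align*}
\bar X_0 - \bX &= \tfrac{1}{T}\bigl[\,V_0 U_0^\top,\; \X V_0^\top + V_0 \X^\top + V_0 V_0^\top\,\bigr],\\
\bar X_1 - \bXplus &= \tfrac{1}{T}\bigl[\,V_1 U_0^\top,\; \Xplus V_0^\top + V_1 \X^\top + V_1 V_0^\top\,\bigr].
\end{align*}

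Next, I would invoke the SLLN termwise. Under Assumption~\ref{assmp}, the noise $v(t)$ is i.i.d.\ zero-mean Gaussian with known variance $\vare$, and $\xo(t)$ is uniformly bounded. Assuming the input $u(t)$ is generated in open-loop (and hence independent of $v(\cdot)$), each of the cross-type sums $\tfrac{1}{T}V_0 U_0^\top$, $\tfrac{1}{T}V_0 \X^\top$, $\tfrac{1}{T}\X V_0^\top$, $\tfrac{1}{T}V_1 U_0^\top$, $\tfrac{1}{T}V_1 \X^\top$, $\tfrac{1}{T}\Xplus V_0^\top$, and $\tfrac{1}{T}V_1 V_0^\top$ is the empirical mean of independent zero-mean terms with uniformly bounded variance, and hence tends to $\mathbf{0}$ almost surely. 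The only non-vanishing term is the same-index noise quadratic
\[
\tfrac{1}{T}V_0 V_0^\top = \tfrac{1}{T}\sum_{t=0}^{T-1} v(t) v(t)^\top \xrightarrow{\text{a.s.}} \vare I_n,
\]
which appears in $\bar X_0 - \bX$ but not in $\bar X_1 - \bXplus$.

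Assembling the limits yields $\bar X_0 - \bX \to [\mathbf{0}_{n \times m},\, \vare I_n] = \frac{1}{T}\Psi$ almost surely, so $\Xbc = \bar X_0 - \frac{1}{T}\Psi$ satisfies \eqref{eq:C1}; likewise $\bar X_1 - \bXplus \to \mathbf{0}$ almost surely, so no correction is needed and $\Xplusbc = \bar X_1$ satisfies \eqref{eq:C2}. The main technical subtlety lies in the vanishing of $\tfrac{1}{T}\Xplus V_0^\top$: because $\xo(t+1)$ depends on $u(t)$, its asymptotic vanishing implicitly requires $u(t)$ and $v(t)$ to be independent, an open-loop data-collection assumption left implicit in Assumption~\ref{assmp}; in a closed-loop experiment a residual bias proportional to $B_o K_x \vare I_n$ would survive and the correction matrix $\Psi$ would have to be enlarged accordingly.
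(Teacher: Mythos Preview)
Your decomposition and termwise limit analysis are exactly the route the paper takes; the only substantive difference is the tool invoked for the almost-sure convergence. The paper appeals to Ninness' strong law of large numbers (stated as a lemma in the appendix), which requires only the covariance-sum bound $\sum_{t,\tau}\expect{w(t)w(\tau)}<CT$ and therefore handles correlated summands directly. Your assertion that each cross-sum consists of \emph{independent} zero-mean terms is not quite right for $\tfrac{1}{T}V_1V_0^\top=\tfrac{1}{T}\sum_{t} v(t{+}1)v(t)^\top$: consecutive summands share a common noise sample and are only $1$-dependent (hence uncorrelated but not independent), so the classical Kolmogorov SLLN does not apply as stated. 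The conclusion survives---via a martingale-difference SLLN or, as in the paper, Ninness' lemma---but the independence claim should be weakened to uncorrelatedness with bounded second moments, which is precisely the hypothesis Ninness' result is designed for. Your closing remark on the implicit open-loop assumption is a valid observation that the paper also leaves tacit.
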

\begin{proof}
   First, we evaluate the expected values of the following  matrices constructed from the noisy state data,
   \begin{align}\label{eq:expect_noisy}
  & \expect{\bar{X}_0} = \frac{1}{T}\expect{ X_0 \Phi^{\top}} \nonumber = \frac{1}{T}\expect{ [X_0 U^{\top}_0 \quad X_0 X^{\top}_0 ] } \nonumber \\
  & \!=\! \frac{1}{T}\X {\Phio}^{\top} \! +\!  \frac{1}{T}\expect{ [V_0U^{\top}_0 \quad V_0(\X)^{\top}\!+\! \X V^{\top}_0 \! +\! V_0 V^{\top}_0]}
  \end{align}
  where we have substituted the noisy state matrix as $X_0 = \X + V_0$. From Assumption~\ref{assmp},
we have,
\begin{subequations}\label{eq:expect_values}
   \begin{align}
    \expect{V_0 U^{\top}_0} &= \expect{ \sum_{t=0}^{T}v(t)u^{\top}(t)} = \mathbf{0}_{n \times m}, \label{eq:relation1} \\
      \expect{V_0 (\X)^{\top}} &= \expect{ \sum_{t=0}^{T}v(t){x^{o}}^{\top}(t)} = \mathbf{0}, \label{eq:relation2} \\
          \expect{V_0 V^{\top}_0} &= \expect{ \sum_{t=0}^{T}v(t)v^{\top}(t)} = T \vare  I_n, \label{eq:relation3}
\end{align} 
\end{subequations}
where we have used the fact that input and noise-free states are deterministic, uncorrelated with the noise process $v$.
By substituting \eqref{eq:expect_values} in \eqref{eq:expect_noisy}, we have,
\begin{align*}
 \expect{ \bar{X}_0 }  &=  \frac{1}{T}\X {\Phio}^{\top} + \frac{1}{T}\expect{    \begin{bmatrix}
        \mathbf{0}_{n \times m} & T \vare I_n
    \end{bmatrix} } \nonumber \\
    &= \bX + \frac{1}{T}\expect{ \Psi}
\end{align*}
Re-arranging the terms,
\begin{align}
   \bX &=   \expect{\bar{X}_0 - \frac{1}{T}\Psi} \overset{\underset{\eqref{eq:replace_mat1}}{}}{=} \expect{\Xbc}.
\end{align}
Thus, $\expect{\bX} = \expect{\Xbc}$ and by Ninness' strong law of large numbers~\cite{Ninness2000} (see, Appendix~\ref{sec:app}), the property~\eqref{eq:C1} follows.
Using similar arguments it can be proved that,
\begin{align*}
    &\expect{ \bar{X}_1} = \frac{1}{T}\expect{X_1\Phi^{\top}} = \frac{1}{T}\expect{\left[X_1U^{\top}_0 \ X_1X^{\top}_0\right]}\\
    &=\!\frac{1}{T} \Xplus {\Phio}^{\top} 
     \!+\! \frac{1}{T}\expect{ [V_1U^{\top}_0 \quad V_1(\X)^{\top}\!+\!\Xplus V^{\top}_0 \!+\!V_1 V^{\top}_0]}.
\end{align*}
As $v$ is a white noise process, $\expect{V_1V^{\top}_0} = \expect{\sum_{t=0}^{T}v(t)v^{\top}(t-1)} = \mathbf{0}$.

Re-arranging the terms, we have, 
\begin{align}
   \bXplus  = \expect{\bar{X}_1} = \expect{\Xplusbc},
\end{align}
and the property \eqref{eq:C2} follows from the direct application of Ninness' strong law of large numbers~\cite{Ninness2000}.
\end{proof}

\subsubsection*{Model-matching optimization problem with bias-corrected matrices}

We  now re-formulate the  model-reference matching SDP~\eqref{eq:Lyap_SDP_noisefree}, by replacing the noise-free matrices $\bX, \bXplus$ with the \emph{bias-corrected} matrices  $\Xbc, \Xplusbc$ that are constructed from noisy data and known variance as given in \eqref{eq:replace_mat}. 
We consider the following SDP program,
\begin{align}\label{eq:noisy-opt-Lyap_BC}
    &\min_{Q_x, Q_r, P} \lVert \Xplusbc Q_x - A_M P\rVert + \lVert \Xplusbc Q_r - B_M P \rVert \nonumber \\
    &\mathrm{s.t.} \quad  \Xbc Q_x = P, \quad 
    \Xbc Q_r = \mathbf{0}_n, \nonumber \\
    & \quad \quad  \begin{bmatrix}
        P & \Xplusbc Q_x \\
        \star & P
    \end{bmatrix} \succ \mathbf{0}_{2n},
    \end{align}
then, it can be proved that the solutions of \eqref{eq:noisy-opt-Lyap_BC} converge asymptotically to those of noise-free SDP \eqref{eq:Lyap_SDP_noisefree} stated in the following result,
\begin{proposition}\label{prop:asym_solutions_BC}
 Let Assumption~\ref{assmp} be satisfied and $\Xbc, \Xplusbc$ are constructed as in \eqref{eq:replace_mat}. Then, asymptotically as $T \rightarrow \infty$, the solutions $(Q_x,Q_r,P)$ of the SDP~\eqref{eq:noisy-opt-Lyap_BC} converge to those of the noise-free SDP~\eqref{eq:Lyap_SDP_noisefree}. $\hfill$ $\square$
\end{proposition}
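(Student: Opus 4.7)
The plan is to reduce the statement to a continuity/consistency argument for parametric semidefinite programs, with the parameters being the data matrices $(\Xbc,\Xplusbc)$ that appear in both the cost and the constraints of \eqref{eq:noisy-opt-Lyap_BC}. The key fact I would invoke is Proposition~\ref{prop1}, which, together with assumption~\ref{asm:bounded_xo} and the strong law of large numbers, gives
\[
\Xbc \xrightarrow{\mathrm{a.s.}} \bX,\qquad \Xplusbc \xrightarrow{\mathrm{a.s.}} \bXplus \qquad \text{as } T\to\infty.
\]
Since the objective $\lVert \Xplusbc Q_x - A_M P\rVert + \lambda\lVert \Xplusbc Q_r - B_M P\rVert$ and each entry of the equality constraints $\Xbc Q_x = P$, $\Xbc Q_r = \mathbf{0}_n$ and of the LMI block $\bigl[\begin{smallmatrix} P & \Xplusbc Q_x \\ \star & P\end{smallmatrix}\bigr]\succ 0$ are continuous in the pair $(\Xbc,\Xplusbc)$, the SDP \eqref{eq:noisy-opt-Lyap_BC} is a continuous perturbation of \eqref{eq:Lyap_SDP_noisefree} on the event (of probability one) on which the convergence above holds. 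The proof then amounts to showing that the argmin is continuous at the limit parameters.

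The main steps I would carry out, in order, are the following. First, restrict attention to the almost-sure event on which $(\Xbc,\Xplusbc)\to(\bX,\bXplus)$. Second, show the \emph{upper bound} on the limiting noisy cost: take any feasible $(Q_x^\star,Q_r^\star,P^\star)$ of \eqref{eq:Lyap_SDP_noisefree}; since $\bX$ is full row rank for $T$ large (a consequence of Assumption~\ref{asm:PE} transferred from the noise-free covariance by the convergence $\Sigma \to \Sigma^o \succ 0$), one can correct $(Q_x^\star,Q_r^\star)$ by a vanishing perturbation to restore the equalities $\Xbc Q_x = P^\star$, $\Xbc Q_r = \mathbf{0}_n$, while the LMI constraint remains satisfied for $T$ large by continuity of the minimum eigenvalue. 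The resulting noisy feasible point has cost converging to the noise-free optimum, proving $\limsup_T J_T^{\mathrm{bc}} \le J^\star$, where $J_T^{\mathrm{bc}}$ and $J^\star$ denote the optima of \eqref{eq:noisy-opt-Lyap_BC} and \eqref{eq:Lyap_SDP_noisefree}. Third, show the \emph{lower bound}: any sequence of noisy optima $(Q_x^T,Q_r^T,P^T)$ admits a convergent subsequence (using normalization, the LMI $\bigl[\begin{smallmatrix} P & \Xplusbc Q_x \\ \star & P\end{smallmatrix}\bigr]\succ 0$, and the equalities to bound the variables), and any limit point is feasible for \eqref{eq:Lyap_SDP_noisefree} by the continuous passage to the limit in the constraints, with cost equal to $\liminf_T J_T^{\mathrm{bc}}\ge J^\star$. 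Combining both bounds yields convergence of the optimal values and, along convergent subsequences, of the minimizers.

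The hard part will be the lower-bound step: unlike pure cost-consistency results (which follow from epi-convergence under trivial hypotheses), here we need convergence of the \emph{arguments} of the SDP, and the presence of strict LMIs together with the linear equality constraints makes this delicate. In particular, one must ensure that the limit point of the perturbed minimizers lies in the closure of the strict-feasibility set of the noise-free SDP in a way that preserves optimality; this typically requires either a compactness/coercivity argument on $(Q_x,Q_r,P)$ or a uniqueness assumption on the noise-free minimizer. A clean route is to invoke a Berge-type maximum theorem after verifying (i) inner and outer set-convergence of the feasible sets, both of which follow from the almost-sure convergence of $(\Xbc,\Xplusbc)$ and the full-rank property of $\bX$, and (ii) equicontinuity of the objective in the data matrices on bounded sets of decision variables. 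Once these two ingredients are in place, the desired convergence $(Q_x^T,Q_r^T,P^T)\to(Q_x^\star,Q_r^\star,P^\star)$ is immediate, and, via $K_x=\bar U_0 Q_x P^{-1}$, $K_r=\bar U_0 Q_r P^{-1}$, translates into asymptotic recovery of the model-reference controller.
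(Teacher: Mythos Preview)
Your proposal is correct and follows essentially the same approach as the paper: invoke Proposition~\ref{prop1} for almost-sure convergence of $(\Xbc,\Xplusbc)$ to $(\bX,\bXplus)$, then appeal to continuity of the cost and constraints in these data matrices to conclude convergence of the optimizers. The paper's own proof is only a brief sketch that stops at pointwise convergence of the cost and constraints, whereas you go further and supply the technical scaffolding (feasible-set convergence, upper/lower bounds on the optimal value, Berge-type argument, coercivity/uniqueness caveats) that a rigorous argmin-convergence argument would actually require.
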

\begin{proof}
 We provide a sketch of the proof as follows. From property \eqref{prop:asym_condition}, the arguments of the cost function in \eqref{eq:noisy-opt-Lyap_BC} converge (as $T \rightarrow \infty$) to that of the  noise-free  SDP in \eqref{eq:Lyap_SDP_noisefree}. Thus, from the continuity of the norm, the cost function in  \eqref{eq:noisy-opt-Lyap_BC}  converges pointwise  to that of \eqref{eq:Lyap_SDP_noisefree}. Similarly,  from the continuity of the convex functions, the convex constraints of \eqref{eq:noisy-opt-Lyap_BC} converge asymptotically to those of \eqref{eq:Lyap_SDP_noisefree}, and  the result follows.   
\end{proof}

From the solutions  obtained by solving~\eqref{eq:noisy-opt-Lyap_BC}, the controller gains can be computed as $K_x= \bar{U}_0 Q_x P^{-1}$ and $K_r= \bar{U}_0 Q_r P^{-1}$, which can be proved to solve the matching problem and $K_x$ can provide  stability guarantees asymptotically. 
The result is stated formally as follows:
\begin{proposition}\label{prop:asym_matching_stability}
Consider system \eqref{eq:system} and a given reference model~\eqref{eq:reference}. Let Assumption~\ref{asm:PE} and Assumption~\ref{assmp} be satisfied. Then, the controller gains  computed as $K_x= \bar{U}_0 Q_x P^{-1}$ and $K_r= \bar{U}_0 Q_r P^{-1}$ from the feasible solutions  $(Q_x, Q_r, P)$ of the SDP~\eqref{eq:noisy-opt-Lyap_BC} solve the matching problem and $K_x$ ensures closed-loop stability asymptotically as $T \rightarrow \infty$. $\hfill$ $\square$
\end{proposition}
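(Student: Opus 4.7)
The plan is to combine Proposition~\ref{prop:asym_solutions_BC}, which establishes asymptotic convergence of the solutions of the bias-corrected SDP~\eqref{eq:noisy-opt-Lyap_BC} to those of the noise-free SDP~\eqref{eq:Lyap_SDP_noisefree}, with Proposition~\ref{prop_KxKr_nf}, which characterizes the matching and stability properties of the noise-free gains, and then transfer these properties through the limit by continuity.

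First, I would invoke Proposition~\ref{prop:asym_solutions_BC} to guarantee that any feasible triple $(Q_x, Q_r, P)$ of~\eqref{eq:noisy-opt-Lyap_BC} converges (w.p.~1 as $T\to\infty$) to a feasible triple of~\eqref{eq:Lyap_SDP_noisefree}, denoted $(Q_x^{\mathrm{o}}, Q_r^{\mathrm{o}}, P^{\mathrm{o}})$, with $P^{\mathrm{o}}\succ 0$ ensured by the Schur-complement LMI. Next I would argue that the noisy projected input matrix $\bar{U}_0 = \tfrac{1}{T}U_0\Phi^{\top}$ converges to its noise-free counterpart $\tfrac{1}{T}U_0(\Phi^{\mathrm{o}})^{\top}$: the only term differing between the two is $\tfrac{1}{T}U_0 V_0^{\top}$, which vanishes almost surely by Ninness' strong law of large numbers under Assumption~\ref{assmp}, using that $u(t)$ is deterministic and uncorrelated with $v(t)$. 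Since matrix inversion is continuous on the open cone $\{P\succ 0\}$, the composite map $(Q_x,Q_r,P,\bar{U}_0)\mapsto (\bar{U}_0 Q_x P^{-1},\bar{U}_0 Q_r P^{-1})$ is continuous at the limit point, so the computed gains $K_x$ and $K_r$ converge almost surely to their noise-free counterparts $K_x^{\mathrm{o}}=\bar{U}_0^{\mathrm{o}} Q_x^{\mathrm{o}}(P^{\mathrm{o}})^{-1}$ and $K_r^{\mathrm{o}}=\bar{U}_0^{\mathrm{o}} Q_r^{\mathrm{o}}(P^{\mathrm{o}})^{-1}$.

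By Proposition~\ref{prop_KxKr_nf}, the limiting pair $(K_x^{\mathrm{o}},K_r^{\mathrm{o}})$ satisfies the model-reference matching conditions~\eqref{eq:model-based matching} and $K_x^{\mathrm{o}}$ renders $A_o+B_o K_x^{\mathrm{o}}=A_M$ Schur stable (since $\mathcal{M}$ is stable). Hence the matching equations are satisfied in the limit by $(K_x, K_r)$, and stability is preserved for sufficiently large $T$ by the continuous dependence of eigenvalues on matrix entries, which ensures that the spectral radius of $A_o+B_o K_x$ eventually lies inside the unit disk.

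The main obstacle, in my view, is making the passage to the limit rigorous. In particular, the convergence of SDP solutions asserted in Proposition~\ref{prop:asym_solutions_BC} is only pointwise in cost and constraints, so one should argue that feasibility of the limit problem together with the strict LMI $P\succ 0$ guarantees that, for $T$ large enough, the noisy SDP admits feasible solutions arbitrarily close to the noise-free ones, avoiding degenerate behaviour when $P$ becomes close to singular. Once this is secured, the rest is a continuity argument and a direct application of Proposition~\ref{prop_KxKr_nf}.
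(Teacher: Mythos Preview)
Your proposal is correct and follows the same high-level strategy as the paper: combine Proposition~\ref{prop:asym_solutions_BC} (convergence of the bias-corrected SDP solutions to the noise-free ones) with Proposition~\ref{prop_KxKr_nf} (matching and stability for the noise-free problem), and pass to the limit by continuity.

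The only noteworthy difference is in the stability step. You argue that $K_x\to K_x^{\mathrm{o}}$ by separately establishing convergence of $\bar{U}_0$, $Q_x$, and $P$, and then invoke continuity of eigenvalues on $A_o+B_oK_x$. The paper instead works directly with the data-based closed-loop identity from Proposition~\ref{prop:data_based_cl}, writing $A_o+B_oK_x=(\Xplusbc+\bar{W}_0)Q_xP^{-1}$ with $\bar{W}_0=\tfrac{1}{T}(A_oV_0-V_1)\Phi^{\top}$, and observes that $\bar{W}_0\to 0$ w.p.~1 (since $\Phi$ is asymptotically uncorrelated with $V_0,V_1$) while $\Xplusbc\to\bXplus$; hence the LMI in~\eqref{eq:noisy-opt-Lyap_BC} asymptotically certifies the true Lyapunov condition. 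Your route is more explicit about the continuity hypotheses (positivity of $P$, continuity of inversion, spectral-radius continuity); the paper's route is more compact and avoids treating $\bar{U}_0$ as a separate object, since the closed-loop matrix is expressed directly in terms of $(\Xplusbc,\bar{W}_0,Q_x,P)$. Both are valid sketches at the level of rigor the paper adopts.
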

\begin{proof}
It is straightforward to prove that the gains $K_x, K_r$ solve the matching problem from the results of Proposition~\ref{prop_KxKr_nf} and Proposition~\ref{prop:asym_solutions_BC}. From \eqref{eq:noisy-opt-Lyap_BC}, note that the closed-loop matrix is given as $A + BK_x = (\Xplusbc + \bar{W}_0)Q_x P^{-1}$. Thus, $K_x$ ensures stability if $(\Xplusbc + \bar{W}_0)Q_x P^{-1}Q_x(\Xplusbc + \bar{W}_0)^{\top}- P \prec \mathbf{0}_n$ where we recall that $\bar{W}_0 = A\frac{1}{T}V_0\Phi^{\top}- \frac{1}{T}V_1\Phi^{\top}$. Note that as $T\rightarrow \infty$, the matrix $\bar{W}_0$ converges to zero (w.p.1) since $\Phi$ is uncorrelated with the noise sequences $V_0, V_1$. Further, $\Xplusbc$ converges to the noise-free matrix $\bXplus$. Combining these arguments  with the results of Proposition~\ref{prop_KxKr_nf} and Proposition~\ref{prop:asym_solutions_BC}, the closed-loop stability is proved.  
\end{proof}

\subsection{Instrumental variable technique}\label{sec:IV}
 In this subsection, we present a second approach employing \emph{instrumental variables} (IV)   inspired from  \cite{campi_vrft02, bzf23}. Let us define a matrix $\Phi^{\mathrm{iv}} \in \mathbb{R}^{(m+n)\times T}$ as, 
\begin{align}
    \Phiiv:= \begin{bmatrix}
        U_0 \\ X^{\mathrm{iv}}_0 
    \end{bmatrix}, 
\end{align}
where $X^{\mathrm{iv}}_0 \in \mathbb{R}^{n \times T}$ is the matrix of \emph{instruments}, that will be utilized to mitigate the effect of the measurement noise. The matrix $X^{\mathrm{iv}}_0$ is constructed from the state samples of \eqref{eq:system} as, 
\begin{align}
    X^{\mathrm{iv}}_0 = \left[x^{\mathrm{iv}}(0) \ x^{\mathrm{iv}}(1) \quad x^{\mathrm{iv}}(T-1)  \right],
\end{align}
where $\{x^{\mathrm{iv}}(t)\}_{t=0}^{T-1}$ denote the states gathered from a \emph{second} \emph{independent} experiment, by exciting the system with the same input sequence $U_0$. We remark that in this experiment, a different  realization of the noise sequence $\tilde{V}_0$ (independent of the one in \eqref{eq:noise_seq}) is assumed to be acting on the system and we have $X^{\mathrm{iv}}_0 = \X + \tilde{V}_0$.   

From the construction of the IV matrix $X^{\mathrm{iv}}_0$, it satisfies the following properties,
\begin{itemize}
\item It is uncorrelated with the noise sequence \eqref{eq:noise_seq} corresponding to the original data, \emph{i.e.}, 
\begin{subequations} \label{eq:Xiv_prop}
\begin{align}
      \lim \limits_{T \rightarrow \infty} \frac{1}{T}V_0 {X^{\mathrm{iv}}_0}^{\top} =\mathbf{0}, \ \mathrm{w.p. 1}.
     \end{align}
\item  As $X^{\mathrm{iv}}_0 = \X + \tilde{V}_0$ with $\tilde{V}_0$ being \emph{zero-mean} white Gaussian noise sequence, we have w.p. 1,
\begin{align}
    &\lim \limits_{T \rightarrow \infty} \frac{1}{T}\X {X^{\mathrm{iv}}_0}^{\top} =\! \lim \limits_{T \rightarrow \infty} \frac{1}{T}\X (\X )^{\top}, \\
    & \lim \limits_{T \rightarrow \infty} \frac{1}{T}\Xplus {X^{\mathrm{iv}}_0}^{\top} = \lim \limits_{T \rightarrow \infty} \frac{1}{T}\Xplus (\X )^{\top}
\end{align}
 \end{subequations}
\end{itemize}

We now state and prove the following result showing the asymptotic convergence of the instrumental variable matrices to the true noise-free ones. 
\begin{proposition}\label{prop:IV}
Let us  define the following matrices,
\begin{align}\label{eq:IV_matrices}
    \Xiv := \frac{1}{T}X_0 {\Phiiv}^{\top}, \ \Xplusiv:= \frac{1}{T}X_1 {\Phiiv}^{\top},  
\end{align}
Then, the following properties are satisfied by $\Xiv, \Xplusiv$,
\begin{subequations}
   \begin{align}
    \lim_{T \rightarrow \infty} \Xiv = \lim_{T \rightarrow \infty} \bX, \quad \mathrm{w.p.} 1, \label{eq:prop_iv0}\\
        \lim_{T \rightarrow \infty} \Xplusiv = \lim_{T \rightarrow \infty} \bXplus, \quad \mathrm{w.p.} 1. \label{eq:prop_iv1}
\end{align} 
\end{subequations}
 $\hfill$ $\square$
\end{proposition}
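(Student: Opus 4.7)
The plan is to prove \eqref{eq:prop_iv0} and \eqref{eq:prop_iv1} by expanding the products in \eqref{eq:IV_matrices} into a noise-free block plus noise-induced cross terms, and then showing that each cross term vanishes almost surely as $T\to\infty$ by combining the IV-orthogonality conditions \eqref{eq:Xiv_prop} with Ninness' strong law of large numbers, exactly as done for the bias-correction result in Proposition~\ref{prop1}.

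First, I would decompose the noisy data blocks as $X_0=\X+V_0$, $X_1=\Xplus+V_1$, and $X_0^{\mathrm{iv}}=\X+\tilde{V}_0$, where $\tilde V_0$ is the independent zero-mean white noise realization of the second experiment. Substituting into \eqref{eq:IV_matrices} and using the block form of $\Phiiv$, this yields
\begin{align*}
\Xiv &= \tfrac{1}{T}\bigl[\,\X U_0^\top + V_0 U_0^\top \;\; \X(X_0^{\mathrm{iv}})^\top + V_0 (X_0^{\mathrm{iv}})^\top\,\bigr],\\
\Xplusiv &= \tfrac{1}{T}\bigl[\,\Xplus U_0^\top + V_1 U_0^\top \;\; \Xplus(X_0^{\mathrm{iv}})^\top + V_1 (X_0^{\mathrm{iv}})^\top\,\bigr],
\end{align*}
while the target noise-free quantities read
\[
\bX=\tfrac{1}{T}\bigl[\X U_0^\top \;\; \X(\X)^\top\bigr],\qquad
\bXplus=\tfrac{1}{T}\bigl[\Xplus U_0^\top \;\; \Xplus(\X)^\top\bigr].
\]

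Next, I would show that each mismatch term tends to zero almost surely. The blocks $\tfrac{1}{T}V_0 U_0^\top$ and $\tfrac{1}{T}V_1 U_0^\top$ vanish w.p.~1 because $U_0$ is deterministic and $v(\cdot)$ is zero-mean white, following the same argument used in \eqref{eq:relation1} together with Ninness' SLLN. The blocks $\tfrac{1}{T}V_0(X_0^{\mathrm{iv}})^\top$ and $\tfrac{1}{T}V_1(X_0^{\mathrm{iv}})^\top$ vanish by the first property in \eqref{eq:Xiv_prop}, since the second-experiment noise $\tilde V_0$ is independent of both $V_0$ and $V_1$, and the deterministic part $\X$ of the IV is already handled by the $\tfrac{1}{T}V_i (\X)^\top\to 0$ bounds that Assumption~\ref{assmp} (boundedness of $\xo$) enables through Ninness' SLLN. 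Finally, the remaining mixed blocks $\tfrac{1}{T}\X(X_0^{\mathrm{iv}})^\top$ and $\tfrac{1}{T}\Xplus(X_0^{\mathrm{iv}})^\top$ converge to $\tfrac{1}{T}\X(\X)^\top$ and $\tfrac{1}{T}\Xplus(\X)^\top$, respectively, by the second and third limits in \eqref{eq:Xiv_prop}. Stacking the four block convergences gives $\Xiv\to\bX$ and $\Xplusiv\to\bXplus$ almost surely.

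The main obstacle, which is really a delicate bookkeeping point rather than a deep step, is the treatment of $\tfrac{1}{T}V_1(X_0^{\mathrm{iv}})^\top$: here we need both (i) independence of the two experiments, so that $V_1$ is uncorrelated with $\tilde V_0$, and (ii) a boundedness-type hypothesis on $\xo$ (Assumption~\ref{assmp}\ref{asm:bounded_xo}) in order to apply Ninness' SLLN to the term $\tfrac{1}{T}V_1(\X)^\top$. These are exactly the two ingredients already used in Proposition~\ref{prop1}, so the proof reduces to collecting the pieces and invoking Ninness' SLLN componentwise; I would write it up by first stating the four block limits as a lemma and then assembling \eqref{eq:prop_iv0}--\eqref{eq:prop_iv1} in one line.
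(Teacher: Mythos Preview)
Your proposal is correct and follows essentially the same approach as the paper: decompose $X_0,X_1,X_0^{\mathrm{iv}}$ into noise-free plus noise parts, identify the resulting cross terms block by block, and eliminate them via the IV properties \eqref{eq:Xiv_prop} together with Ninness' SLLN. The paper's version is slightly terser (it routes the argument through expectations before invoking Ninness and does not separately spell out the $\tfrac{1}{T}V_1(X_0^{\mathrm{iv}})^\top$ term), whereas you make each block convergence explicit; this is a presentational difference only, not a different method.
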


\begin{proof}
Let us consider the expected value of the matrix $\Xiv$,
\begin{align}
        \expect{\Xiv}   & =     \frac{1}{T}\expect{ X_0 {\Phiiv}^{\top}} =  \frac{1}{T}\expect{\left[X_0U^{\top}_0 \ \ X_0(X^{\mathrm{iv}}_0)^{\top}\right]} \nonumber \\
        &= \frac{1}{T}\expect{\left[X_0U^{\top}_0 \ \ (\X +V_0)(X^{\mathrm{iv}}_0)^{\top}\right]}
\end{align}
From the properties of $X^{\mathrm{iv}}_0$ in \eqref{eq:Xiv_prop}, we have $\expect{(\X +V_0)(X^{\mathrm{iv}}_0)^{\top}} = \expect{\X (\X)^{\top}}$,
thus we get,
\begin{align}
    \expect{\Xiv} = \frac{1}{T}\expect{\left[X_0U^{\top}_0 \ \ \X(\X)^{\top}\right]} = \expect{\bX}.
\end{align}
Thus, property~\eqref{eq:prop_iv0} follows from the direct application of Ninness' strong law of large numbers~\cite{Ninness2000}. Using similar arguments, it can be proved that $\expect{\Xplusiv}= \expect{\bXplus}$ and property \eqref{eq:prop_iv1} follows.
\end{proof}

\subsubsection*{Model-matching optimization problem with instrumental variable matrices}
  Proposition~\ref{prop:IV} allows us to formulate the model-reference optimization problem with noisy data and the available IV matrices 
such that its solution converge asymptotically to that of noise-free SDP in \eqref{eq:Lyap_SDP_noisefree}.   As before, 
we re-formulate the SDP in~\eqref{eq:Lyap_SDP_noisefree}, by replacing the noise-free matrices $\bX, \bXplus$ with the instrumental variable matrices  $\Xiv, \Xplusiv$ defined in \eqref{eq:IV_matrices}. 
We consider the following SDP program,
\begin{align}\label{eq:noisy-opt-Lyap_IV}
    &\min_{Q_x, Q_r, P} \lVert \Xplusiv Q_x - A_M P\rVert + \lVert \Xplusiv Q_r - B_M P \rVert \nonumber \\
    &\mathrm{s.t.} \quad  \Xiv Q_x = P, \quad 
    \Xiv Q_r = \mathbf{0}_n, \nonumber \\
    & \quad \quad  \begin{bmatrix}
        P & \Xplusiv Q_x \\
        \star & P
    \end{bmatrix} \succ \mathbf{0}_{2n}.
    \end{align}
With the same rationale as given in Proposition~\ref{prop:asym_solutions_BC} for the BC scheme, it can be proved that the solutions $(Q_x, Q_r, P)$ of the instrumental variable SDP~\eqref{eq:noisy-opt-Lyap_IV}, converge asymptotically (as $T\rightarrow \infty$) to those of noise-free SDP~\eqref{eq:Lyap_SDP_noisefree}.
Consequently, the controller gains  computed as $K_x= \bar{U}_0 Q_x P^{-1}$ and $K_r= \bar{U}_0 Q_r P^{-1}$  solve the matching problem and provide  stability guarantees asymptotically, which can proved following similar arguments as those given in the proof of Proposition~\ref{prop:asym_matching_stability}.

\begin{remark}[Comparison with~\cite{bpft21}]
In \cite{bpft21}, to handle noisy data, an \emph{averaging} strategy is proposed, which involves gathering data from multiple independent experiments and computing the averages of the collected data matrices. To ensure that these averages do not vanish, \emph{repeated} experiments must be conducted, meaning the system is excited with exactly the same input sequence and initial conditions for each experiment. In contrast, the proposed BC scheme requires only a single experiment, with the trade-off that the noise variance is assumed to be known\footnote{If $\vare$ is  unknown, it can be estimated via a grid-search  over a validation dataset, or, it can be computed from \emph{two} repeated experiments as, $\vare = \frac{1}{2}\mathrm{Var}({x_i-x^{\mathrm{iv}}_i}) = \frac{1}{2T}\sum_{t=0}^{T-1}(x_i(t)-x^{\mathrm{iv}}_i(t))^2$.}.  
On the other hand, the IV technique necessitates two repeated experiments, one of which is used to construct the instruments. 
Further, \cite[Theorem 4]{bpft21} provides a sufficient condition for closed-loop stability  with finite number of experiments, under an additional assumption that the  noise satisfies certain bounds. By assuming  additional bounds on the cross-covariance noise term $\bar{W}_0$, similar conditions can be derived in our case for finite $T$. However, a detailed discussion on this topic is beyond the scope of this work and is left for future research.
\end{remark}

\section{NUMERICAL EXAMPLE}\label{sec:example}
We demonstrate the effectiveness of the proposed approaches with a numerical example. All algorithms have been implemented on an  i7-1.40 GHz Intel core  running \texttt{MATLAB R2022a},  with \texttt{MOSEK} to solve the SDP programs.

We consider the LTI data-generating system \eqref{eq:system} with system matrices given as follows \cite{bpft21}:
\begin{align*}
   & A_o =
\begin{bmatrix}
  0.1344& 0.2155& -0.1084\\
0.4585& 0.0797& 0.0857\\
-0.5647& -0.3269 &0.8946  
\end{bmatrix}, \\
&B_o =
\begin{bmatrix}
  0.9298& 0.9143& -0.7162\\
-0.6848& -0.0292 &-0.1565\\
0.9412& 0.6006 &0.8315  
\end{bmatrix}.
\end{align*}

For the bias-correction approach (Section~\ref{sec:bc}), we consider a \emph{single} dataset comprising $T=30000$ state-input samples gathered from this system, while, for the instrumental variable technique (Section~\ref{sec:IV}), we use \emph{two} experiments, both having the same input sequence but with different state sequences resulting from distinct realizations of  measurement noise. One of these state sequences is used to construct the matrix of instruments. The total number of training data samples for the IV approach is  also $T=30000$.  

To assess the statistical properties of the proposed approaches, we perform a \emph{Monte-Carlo} (MC) study of $100$ runs. In each MC run, a new dataset of state-input  and noise samples is generated. We compare our approaches to the method proposed in \cite{bpft21}, which requires multiple datasets gathered from the system, and handles the effect of noise via an \emph{averaging} strategy. To ensure that the average of the data matrices does not vanish, \emph{repeated} experiments are employed as in \cite{bpft21}, using the same input sequence for each experiment. For a fair comparison, we consider the same number of training data samples by considering $1000$ experiments, each of length $T=30$ samples, resulting in a total of  $30000$ state-input samples.  

In all the experiments described above, we excite the system with an input $u$ uniformly distributed in the interval $[-2,2]$. The collected states are corrupted by zero-mean white Gaussian noise $v \sim \mathcal{N}(0, \vare)$. We analyze two noise conditions by setting  the variance to $\vare = \{0.25, 1\}$, which corresponds to  average \emph{signal-to-noise ratio} (SNR) of $13 \ \mathrm{dB}$ and $7 \ \mathrm{dB}$, respectively over $100$ MC runs. The SNR is computed as,
\begin{align}
 \mathrm{SNR}   = \frac{1}{n}\sum \limits_{i=1}^{n}     10\log \left(\frac{\sum_{t=0}^{T-1}(\xo_i(t))^2}{\sum_{t=0}^{T-1} (w_i(t))^2} \right)  \mathrm{dB},    
\end{align}
where, $\xo_i$ denotes the $i$-th state component.

We select the reference model $\mathcal{M}$  in \eqref{eq:reference} with $A_M = 0.2 I_3$ and $B_M = 0.8 I_3$.
To assess the performance, we compare the obtained controller gains $K_x, K_r$ with the true ones $K^{\star}_x, K^{\star}_r$ which achieve perfect matching. Specifically, we quantify the errors
$\|K^{\star}_x - K_x \|_2$ and $\|K^{\star}_r - K_r \|_2$, where the true controller gains for a perfect matching are given as
\begin{align*}
   & K^{\star}_x =
\begin{bmatrix}
  0.6308 & -0.2920&  0.3080\\
-0.3814& 0.4011& -0.7166\\
0.2405& 0.4340& -0.6664  
\end{bmatrix}, \\
&K^{\star}_r =
\begin{bmatrix}
  0.0768 &-1.3126& -0.1809\\
0.4654& 1.5957& 0.7012\\
-0.4231 &0.3332& 0.6604  
\end{bmatrix}.
\end{align*}

\begin{figure}[t!]
\captionsetup[subfigure]{justification=centering}
\centering
\begin{subfigure}{1\columnwidth}
	\includegraphics[width= 0.47\columnwidth]{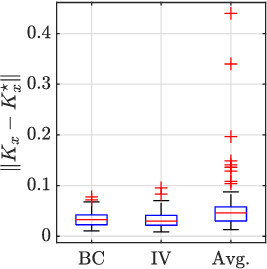} \ 
 	\includegraphics[width= 0.47\columnwidth]{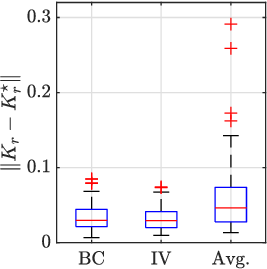}
  \caption{SNR $7 \ \mathrm{dB}$ }
\end{subfigure} \\
\begin{subfigure}{1\columnwidth}
\includegraphics[width= 0.47\columnwidth]{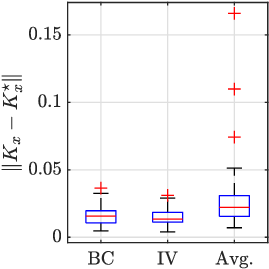} \ 
\includegraphics[width= 0.47\columnwidth]{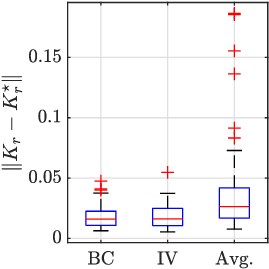}
\caption{SNR $13 \ \mathrm{dB}$ }
\end{subfigure}
 \caption{Comparison of bias-correction (BC), instrumental variable (IV) and averaging  (Avg.) strategy proposed in \cite{bpft21}.}
\label{fig:K_error}
\end{figure}

\begin{figure}[t!]
\captionsetup[subfigure]{justification=centering}
\centering
\begin{subfigure}{0.49\columnwidth}
\includegraphics[scale=0.6]{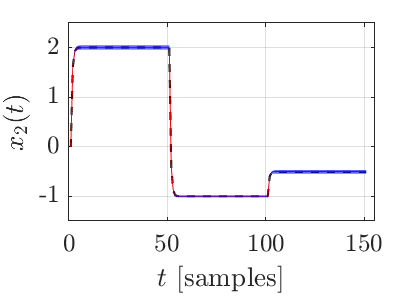} 
 \caption{Bias-correction scheme} \label{fig:tracking_bc}
 \end{subfigure} 
\begin{subfigure}{0.49\columnwidth}
\includegraphics[scale=0.6]{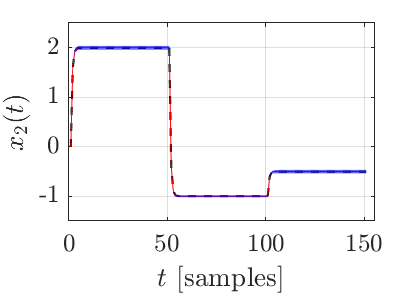} 
\caption{Instrumental variables}\label{fig:tracking_iv}
\end{subfigure}  
\begin{subfigure}{0.49\columnwidth}
\includegraphics[scale=0.6]{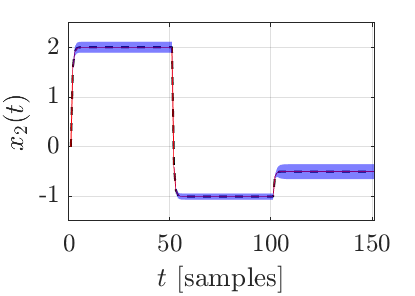} 
\caption{Averaging strategy~\cite{bpft21}} \label{fig:tracking_avg}
\end{subfigure} 
 \caption{Reference tracking: Desired state $x_{d,2}(t)$ (red); mean (dashed black) and std. deviation (blue shaded area) of the closed-loop state $x_2(t)$ over $100$ MC runs, avg. SNR= $7$ dB.}
\label{fig:tracking}
\end{figure}

\begin{figure}[t!]
\captionsetup[subfigure]{justification=centering}
\centering
\begin{subfigure}{1\columnwidth}
\centering
	\includegraphics[width= 0.8\columnwidth]{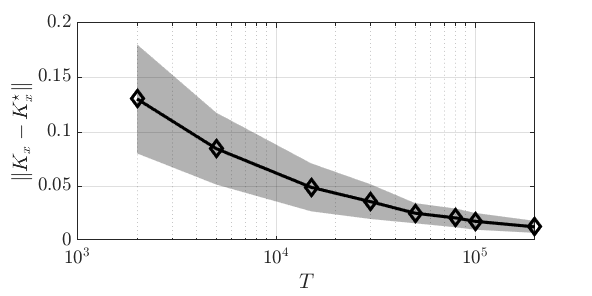} 
  \caption{Bias-correction approach}
\end{subfigure} 
\begin{subfigure}{1\columnwidth}
\centering 
	\includegraphics[width= 0.8\columnwidth]{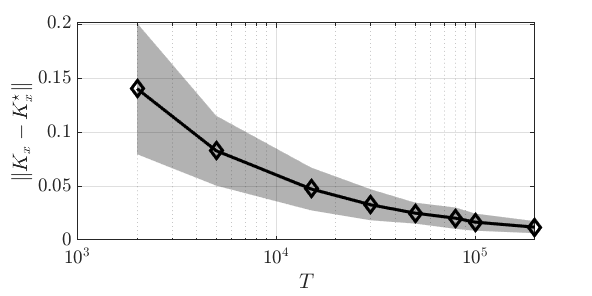}  
\caption{Instrumental variable technique}
\end{subfigure}
 \caption{Effect of data-length $T$  on the performance.}
\label{fig:err_vs_T}
\end{figure}

 In Fig. \ref{fig:K_error}, we compare the performance of the proposed bias-correction (BC) and instrumental variable (IV) approaches with the averaging strategy (Avg.) proposed in \cite{bpft21}. The figure shows  error boxplots obtained from MC runs for two different noise scenarios.  We observe that the proposed BC and IV approaches yield comparable results to  the method in~\cite{bpft21}. However,  the boxplots indicate that the proposed approaches are  more robust to  variations in  noisy data, exhibiting less bias, lower variance, and fewer outliers compared to the  method of~\cite{bpft21}. This robustness is also reflected in Fig.~\ref{fig:tracking}, which shows the closed-loop simulation results of the second state component $x_2(t)$ tracking a desired reference state $x_{d,2}(t)$ dictated by the reference model $\mathcal{M}$ for average SNR of $7$ dB. We observe that the variance of the steady-state over the Monte-Carlo runs is significantly lower with the proposed BC and IV schemes (Fig.~\ref{fig:tracking_bc}-\ref{fig:tracking_iv}), compared to the  averaging strategy (Fig.~\ref{fig:tracking_avg}).    

Additionally, we analyze the effect of the  dataset length $T$ on the performance of the proposed schemes  as shown in Fig.~\ref{fig:err_vs_T}. As expected from the asymptotic analysis, the  error in the controller gains w.r.t. the true values decreases with increasing $T$, demonstrating that  as $T \rightarrow \infty$, the BC and IV matrices constructed from available noisy data converge to those constructed from the noise-free data. It is worth to stress that,  the average computation time to solve the SDP remains \emph{constant} at $0.16$ seconds regardless of the increase in $T$,
thanks to the efficient covariance parameterization of the controller matrices. Therefore, the proposed approaches are well-suited to handle very large datasets.

\section{CONCLUSION}

We have presented approaches for direct data-driven model-reference control design from noisy data. We have shown that using a suitable covariance parameterization of the controller,  bias-correction and instrumental variable approaches can be integrated into the  framework of data-driven MRC computation, such that,  the effect of noise can be eliminated as the data length increases. The simulation study shows that the proposed approaches are more robust w.r.t. the averaging strategy proposed in \cite{bpft21}. Future works will be devoted to extending the method for handling   constraints on states and inputs.

\section{APPENDIX}\label{sec:app}
We recall the following result which is used to prove Proposition~\ref{prop1}.
\begin{lemma}[Ninness' strong law of large numbers~\cite{Ninness2000}]\label{lem:Ninness}
    Let $\{w(t)\}$ be a sequence of random variables with arbitrary correlation structure  (not necessarily stationary), that is characterized by the existence of a finite value $C$ such that
$\sum_{t=0}^{T-1}\sum_{\tau=0}^{T-1}\expect{w(t) w(\tau)} < C T.$
    Then,
       $ \frac{1}{T} \sum_{t=0}^{T-1} w(t) \overset{a.s.}{\rightarrow} 0, \quad \mathrm{as} \ T \rightarrow \infty. \hfill  \square $
\end{lemma}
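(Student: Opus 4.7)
The plan is to establish the lemma through a standard second-moment estimate combined with a Borel--Cantelli argument along a thinned subsequence, and then an oscillation bound to fill in the remaining gaps.

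First, I would set $S_T := \frac{1}{T}\sum_{t=0}^{T-1} w(t)$ and note that the hypothesis directly yields the second-moment bound
$$\mathbb{E}\{S_T^2\} \;=\; \frac{1}{T^{2}}\sum_{t=0}^{T-1}\sum_{\tau=0}^{T-1}\mathbb{E}\{w(t)w(\tau)\} \;<\; \frac{C}{T},$$
so $S_T \to 0$ in $L^{2}$, and by Markov's inequality $\Pr(|S_T|>\varepsilon) < C/(T\varepsilon^{2})$ for every $\varepsilon>0$.

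Second, to upgrade $L^{2}$-convergence to almost sure convergence, I would thin along the subsequence $T_k := k^{2}$. Since $\sum_{k\ge 1}\Pr(|S_{T_k}|>\varepsilon) \le \sum_{k\ge 1} C/(k^{2}\varepsilon^{2}) < \infty$, the first Borel--Cantelli lemma gives $S_{T_k}\stackrel{a.s.}{\longrightarrow} 0$.

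The main obstacle is controlling the oscillation of $S_T$ between consecutive elements of this subsequence, since the hypothesis provides no orthogonality of the $w(t)$. For $T_k\le T<T_{k+1}$ I would decompose
$$S_T \;=\; \frac{T_k}{T}\,S_{T_k} \;+\; \frac{1}{T}\sum_{t=T_k}^{T-1} w(t),$$
so that $|S_T|\le |S_{T_k}| + M_k$ with $M_k := T_k^{-1}\max_{T_k\le T<T_{k+1}}\bigl|\sum_{t=T_k}^{T-1}w(t)\bigr|$. It then suffices to prove $M_k \to 0$ a.s. The plan is to bound $\mathbb{E}\{M_k^{2}\}$ by applying the hypothesis to the appropriate shifted partial sums (which bounds the variance of each increment of length at most $T_{k+1}-T_k = 2k+1$ by $O(k)$) and then invoking a Hajek--Renyi / Menshov--Rademacher-type maximal inequality. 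This yields $\mathbb{E}\{M_k^{2}\} = O(\log^{2} k / k^{2})$, which is summable in $k$; a further application of Chebyshev and Borel--Cantelli gives $M_k\to 0$ a.s.

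Combining $S_{T_k}\to 0$ a.s. with $M_k\to 0$ a.s. yields $S_T\to 0$ a.s., proving the lemma. The hardest technical step is the maximal inequality for non-orthogonal summands in the oscillation bound; I expect this to be resolvable by recasting the argument in terms of second moments of partial sums over disjoint blocks, which enjoy a subadditivity property obtained by directly expanding the square and reusing the growth hypothesis on the original sequence.
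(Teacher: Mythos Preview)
The paper does not prove this lemma at all; it is quoted from \cite{Ninness2000} without proof and used as a black box in the appendix argument for Proposition~\ref{prop1}. So there is no in-paper proof to compare against, and your proposal is supplying an argument where the paper offers none.

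That said, your oscillation step has a genuine gap. The hypothesis bounds only $\mathbb{E}\bigl\{\bigl(\sum_{t=0}^{T-1}w(t)\bigr)^{2}\bigr\}<CT$, i.e.\ second moments of partial sums \emph{starting at the origin}; it does not directly control shifted block sums. From the origin bound together with Cauchy--Schwarz you obtain at best
\[
\mathbb{E}\Bigl\{\Bigl(\textstyle\sum_{t=a}^{b-1}w(t)\Bigr)^{2}\Bigr\}\le C(\sqrt{a}+\sqrt{b})^{2}=O(b),
\]
not $O(b-a)$; for $a=k^{2}$ and $b\le(k+1)^{2}$ this is $O(k^{2})$, not the $O(k)$ you claim. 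A union bound over the $2k+1$ intermediate indices then yields a tail probability for $M_k$ of order $1/k$, which is not summable, so Borel--Cantelli does not apply. The Menshov--Rademacher inequality you invoke requires orthogonality, and its M\'oricz-type extensions require a superadditive bound on block variances in terms of block \emph{length}---precisely what the stated hypothesis does not furnish, since the covariances $\mathbb{E}\{w(t)w(\tau)\}$ may be negative and no useful subadditivity of block second moments follows. Your closing remark about ``subadditivity obtained by expanding the square'' is too vague to close this gap; a different device (as in Ninness' original argument) is needed for the maximal step.
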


We can prove now conditions \eqref{eq:C1} and \eqref{eq:C2} in Proposition~\ref{prop1} with a direct application of Lemma~\ref{lem:Ninness}.

\textit{Proof for condition \eqref{eq:C1}}:

From the construction of $\Xbc$ in \eqref{eq:replace_mat}, we have
\begin{align}
    \expect{\Xbc} &= \expect{\frac{1}{T}X_0\Phi^{\top}- \frac{1}{T}\Psi} = \expect{\bX}.
\end{align}
Let us consider the random variables $w_{i,j}(t)$ such that,
\begin{equation}\label{eq:vij}
  \left[\Xbc -\bX \right]_{i,j} \!=\!  \left[ \bar{X}_0 \!-\! \frac{1}{T}\Psi \!-\! \bX \right]_{i,j} \! = \! \frac{1}{T}\sum_{t=0}^{T-1} w_{i,j}(t),
\end{equation}
where $\left[\cdot \right]_{i,j}$ denotes the $(i,j)$-th element of the matrix.
Thus,
\begin{equation}
    w_{i,j}(t) \!=\! \smallmat{v(t){u(t)}^{\top} \ \xo(t)v^{\top}(t)\!+\!v(t)({\xo}(t))^{\top} \!+\!v(t)v^{\top}(t) \!-\! \vare I_n}_{i,j}
    \end{equation}
Note that $w_{i,j}(t)$  depends only on the noise-free state sequence $\{\xo(t)\}_{t=0}^{T-1}$ and white noise  samples $\{v(t)\}_{t=0}^{T-1}$.
From the construction of $\Xbc$, we have that the random variables $w_{i,j}(t)$ are zero-mean. As the noise process $v$ is assumed to be white, for time-index pairs $t, \tau$  we have,
\begin{equation}
    \expect{w_{i,j}(t)w_{i,j}(\tau)} = 0, \quad \mathrm{for \ all} \ t \neq \tau, \ t,\tau \geq 0 
\end{equation} 
Further, as the noise-free state $\xo(t)$ is assumed to be bounded, $\expect{w_{i,j}(t)w_{i,j}(t)}$ is bounded for any $t > 0$, \emph{i.e.},
\begin{align}
    \expect{w_{i,j}(t)w_{i,j}(t)} < C_{ij} \ \mathrm{for \ all} \  t>0.
\end{align}

With these results we have,
\begin{align}
     \sum_{t=0}^{T-1}\sum_{\tau=0}^{T-1} \expect{w_{i,j}(t)w_{i,j}(\tau)} =  \sum_{t=0}^{T-1} \expect{w_{i,j}(t)w_{i,j}(t)} < C_{ij} T.
\end{align}
   Therefore, from Lemma~\ref{lem:Ninness}, it follows that,
   \begin{equation*}
        \frac{1}{T} \sum_{t=0}^{T-1} w_{i,j}(t) \overset{a.s.}{\rightarrow} 0, \quad \mathrm{as} \ T \rightarrow \infty,
    \end{equation*}
equivalently,  from \eqref{eq:vij}, we have,
\begin{align}
    \lim_{T \rightarrow \infty} \left[\Xbc \right]_{i,j} = \lim_{T \rightarrow \infty} \left[ \bX \right]_{i,j} \quad \mathrm{w.p.} \ 1.
\end{align}
thus, proving the condition \eqref{eq:C1} in Proposition~\ref{prop1}.

With similar arguments, it can be proved that,
\begin{align}
    \lim_{T \rightarrow \infty} \left[\Xplusbc \right]_{i,j} = \lim_{T \rightarrow \infty} \left[ \bXplus \right]_{i,j} \quad \mathrm{w.p.} \ 1,
\end{align}
proving the condition \eqref{eq:C2} in Proposition~\ref{prop1}.

\bibliographystyle{plain}
\bibliography{references}

\end{document}